\documentclass[letterpaper, 10 pt, conference]{ieeeconf}  

\usepackage{cite}
\usepackage{subcaption}
\usepackage{algorithm}
\usepackage{algpseudocode}
\usepackage{amsmath, amsfonts, amssymb, arydshln}
\usepackage{booktabs}
\usepackage{latexsym}
\usepackage{dsfont}
\usepackage{floatpag}
\usepackage{float}

\usepackage{verbatim}
\newtheorem{theorem}{Theorem}

\usepackage{graphicx,ragged2e}

\IEEEoverridecommandlockouts                              

\title{\LARGE \bf
Strategic Dispatching Equilibrium under Competition in Multi-Regional Ride-Hailing Markets}

\author{
Ran Chen, Nikolas Geroliminis$^*$ 
\thanks{Ran Chen, and Nikolas Geroliminis are with the Urban Transport Systems Laboratory (LUTS), École Polytechnique Fédérale de Lausanne (EPFL), Switzerland.}
\thanks{This work was supported as a part of NCCR Automation, a National Centre of Competence in Research, funded by the Swiss National Science Foundation (grant number 51NF40\_225155).}
\thanks{© 2026 IEEE. Personal use of this material is permitted. Permission from IEEE must be obtained for all other uses, in any current or future media, including reprinting/republishing this material for advertising or promotional purposes, creating new collective works, for resale or redistribution to servers or lists, or reuse of any copyrighted component of this work in other works.}
}

\begin{document}

\maketitle
\thispagestyle{empty}
\pagestyle{empty}

\begin{abstract}
This paper studies competition between ride-hailing companies in a multi-regional market through a static non-cooperative game, in which each company allocates its fleet across regions to maximize profit. Under the assumptions of the proposed framework, we establish an equilibrium existence result on a restricted feasible set by analyzing the asymptotic quasi-concavity of the payoff functions. A numerical study of a two-company, two-region market shows that relative fleet size strongly affects equilibrium dispatching strategies: the smaller company tends to concentrate its fleet in one region and serves both regions only after its fleet share (in percentage) exceeds a threshold. In the numerical setting considered here and under the Logit demand specification, more balanced fleet sizes are associated with higher total industry profit and higher total served demand than strongly asymmetric fleet configurations.

\end{abstract}

\section{Introduction}
Urban transportation systems are rapidly evolving, and ride-hailing (RH) has become a major form of on-demand mobility. By improving the matching between passengers and drivers, RH platforms such as Uber, Lyft, and Didi have substantially changed urban travel markets. At the same time, their growth in competitive and spatially distributed markets has created new challenges for fleet management and market regulation. A key issue in this context is the aggregate management of RH fleets across regions. Dispatching decisions are strongly affected by spatial demand imbalances and local traffic conditions, often causing excessive vehicle concentration in attractive areas, which worsens congestion and reduces system efficiency~\cite{castillo2017surge, simonetto2019real, XuA_2021}. For large-scale systems, such challenges motivate aggregate and control-oriented modeling approaches, including macroscopic formulations~\cite{geroliminis2008existence}. By reducing high-dimensional vehicle-level interactions to tractable aggregate relations, macroscopic models are particularly suitable for large-scale analysis and control. They have been used to study RH operations in multi-regional settings and to design predictive, bi-level, and hierarchical fleet management strategies~\cite{zhu2023data, beojone2024hierarchical}. Other RH models have examined matching frictions, surge pricing, driver labor supply, and platform competition from economic and market-equilibrium perspectives~\cite{chen2016dynamic}.

Competition between mobility operators is commonly analyzed using game-theoretic models~\cite{fisk1984game, ban2019general}. In the RH literature, prior studies have examined labor supply responses to flexible platform work~\cite{hall2018analysis} and inter-platform competition in regulated ride-hail markets with pooling and single- or multi-homing supply~\cite{zhang2021inter}. Related work has also extended competition analysis to operational settings such as hierarchical pricing for charging ride-hailing electric fleets~\cite{maljkovic2023hierarchical}. More broadly, recent studies have considered richer competitive environments involving multiple mobility service providers and multimodal equilibrium interactions~\cite{najmi2023multimodal, bandiera2024mobility}.

Against this background, we adopt a stylized macroscopic setting to isolate the effect of fleet size heterogeneity on regional dispatching behavior. Specifically, we study how the relative fleet sizes of competing RH companies affect their aggregate dispatching strategies in a multi-regional market. We formulate the interaction as a non-cooperative game in which each company allocates its fleet across regions to maximize its revenue. Within this simplified framework, our goal is to characterize how market structure shapes dispatching incentives and equilibrium outcomes. On the theoretical side, we analyze equilibrium existence on a restricted feasible set by studying the asymptotic quasi-concavity property of the payoff functions. On the numerical side, we use a two-company, two-region example to illustrate how relative fleet size affects equilibrium strategies and aggregate outcomes. In this case study, the numerical results suggest that fleet parity is associated with higher total profit and served demand under the Logit-based demand specification.

The remainder of this paper is organized as follows. Section~\ref{sec:model} introduces the game formulation and its main components. Section~\ref{sec:existence} presents the equilibrium analysis. Section~\ref{sec:result} reports the numerical study on a simplified system. Finally, Section~\ref{sec:concl} concludes the paper and discusses directions for future research.
\section{Static Dispatching Game}
\label{sec:model}

Consider a ride-hailing market with $n$ competing companies and $m$ regions, each characterized by several local properties. Each company $i\in \mathcal{N}=\{1,2,\cdots\}$ operates a total fleet ${N}_i$, distributed across regions $j\in \mathcal{M}=\{A,B,\cdots\}$ as ${N}_i^j$, with
\begin{equation}
\label{eq:sumNij}
    \sum_{j\in\mathcal{M}} N_i^j = {N}_i. 
\end{equation}
For simplicity, we model ${N_i^j}$ as a continuous aggregate fleet allocated by company $i$ to region $j$. Similarly, ${V_i^j}$, ${O_i^j}$, and ${D_i^j}$ denote the empty fleet, occupied fleet in [veh], and demand inflow rate in [veh/min] in region $j$. At steady state, these quantities satisfy the regional conservation equation inspired by \cite{xu2020supply}:
\begin{equation}
\label{eq:N=V+(w+T)D}
N_i^j = V_i^j + (w_i^j + T_i^j) D_i^j = V_i^j + O_i^j,
\end{equation}
where ${w_i^j}$ is the expected passenger waiting time, $T_i^j$ is the in-vehicle travel time, and $O_i^j$ denotes the occupied fleet. To simplify the system, we adopt the following assumptions:
\begin{enumerate}
\item \textbf{Stationary regional conditions:} For each region, the travel time $T_i^j$ and total company fleet ${N}_i=\sum_jN_i^j$ remain constant over the analyzed period.
\item \textbf{Monotonic fleet decomposition:} Both $V_i^j$ and $O_i^j$ increase with the dispatched fleet $N_i^j$.
\item \textbf{Cost-sensitive demand:} The realized demand $D_i^j$ depends on the generalized costs perceived by customers.
\item \textbf{Accessible information:} Companies observe the generalized costs of their competitors.
\end{enumerate}

Assumption 1 reflects the steady-state and regional nature of the model. Assumption 2 characterizes an efficient operating regime in which additional fleet is productively used \cite{kaddoura2021impact}, implying
\begin{equation}
     \frac{\partial V_i^j}{\partial N_i^j}+\frac{\partial O_i^j}{\partial N_i^j} =1,
     \qquad
     \frac{\partial V_i^j}{\partial N_i^j}>0,
     \qquad
     \frac{\partial O_i^j}{\partial N_i^j}>0.
\end{equation}

To model customer choice under Assumption 3, let $-i$ denote the set of competitors other than company $i$. We adopt a linear generalized cost, exponential utility, and a Logit demand model:
\begin{equation}
\label{eq:C}
C_i^j(w_i^j,T_i^j,f_i^j)= \lambda_w^j w_i^j+\lambda_T^jT_i^j+\lambda_f^j f_i^j,
\end{equation}
\begin{equation}
    u_i^j(C_i^j)=\exp(-\theta^jC_i^j),
\end{equation}
\begin{equation}
    D_i^j(u_i^j, u_{-i}^j)= D_\mathrm{max}^j\frac{u_i^j}{u_\mathrm{PT}^j+u_i^j+u_{-i}^j}
    =D_\mathrm{max}^j\frac{u_i^j}{u_\mathrm{tot}^j}.
\end{equation}
For simplicity, the fare $f_i^j$ is treated as fixed, reflecting that pricing decisions are typically updated on a slower timescale than dispatching decisions. Here, $D_\mathrm{max}^j$ is the maximum realizable demand shared by ride-hailing and public transport, $\lambda$ and $\theta$ are user sensitivity parameters, and $u_\mathrm{PT}^j$ is the utility of public transport. Assumption 4 is natural since waiting times and fares are typically observable to customers. Finally, we express the waiting time as a convex decreasing function of the empty fleet:
\begin{equation}
\label{eq:w}
    w_i^j(V_i^j)=\beta^j (V_i^j)^{-c^j},
\end{equation}
where typically $c^j=0.5$ \cite{arnott1996taxi}. Hence, $D_i^j$ is ultimately a function of $V_i^j$ and $V_{-i}^j$. For conciseness, these arguments are omitted hereafter. Then, under Assumption 2, the occupied fleet $O_i^j$ satisfies
\begin{equation}
\label{eq:dOijdNij>0}
    \frac{\partial O_i^j}{\partial N_i^j}
    =D_i^j\frac{\partial w_i^j}{\partial V_i^j}\frac{\partial V_i^j}{\partial N_i^j}
    +(w_i^j+T_i^j)\frac{\partial D_i^j}{\partial N_i^j} >0.
\end{equation}
This implies that demand is positively responsive:
\begin{equation}
\label{eq:dDijdNij>0}
    \frac{\partial w_i^j}{\partial V_i^j}<0<\frac{\partial V_i^j}{\partial N_i^j}
    \quad\Rightarrow\quad
    \frac{\partial D_i^j}{\partial N_i^j}
    >-\frac{D_i^j\frac{\partial w_i^j}{\partial V_i^j}\frac{\partial V_i^j}{\partial N_i^j}}{w_i^j+T_i^j}
    >0.
\end{equation}
Conversely, for a competitor $r\neq i$,
\begin{equation}
\begin{aligned}
    \frac{\partial D_r^j}{\partial N_i^j}
    =-D_\mathrm{max}^j\frac{u_r^j}{(u_\mathrm{tot}^j)^2}
    \frac{\partial u_i^j}{\partial V_i^j}\frac{\partial V_i^j}{\partial N_i^j}<0.
\end{aligned}
\end{equation}
The corresponding response of competitor $r$'s occupied fleet is
\begin{equation}
\label{eq:dOrjoverdNij}
\begin{aligned}
    \frac{\partial O_r^j}{\partial N_i^j}
    &= D_r^j\frac{\partial w_r^j}{\partial V_r^j}\frac{\partial V_r^j}{\partial N_i^j}
      +(w_r^j+T_r^j)\frac{\partial D_r^j}{\partial N_i^j} \\
    &= D_r^j\frac{\partial w_r^j}{\partial V_r^j}\left(-\frac{\partial O_r^j}{\partial N_i^j}\right)
      +(w_r^j+T_r^j)\frac{\partial D_r^j}{\partial N_i^j} \\
    &= \frac{(w_r^j+T_r^j)\frac{\partial D_r^j}{\partial N_i^j}}
    {1+D_r^j\frac{\partial w_r^j}{\partial V_r^j}}.
\end{aligned}
\end{equation}
We seek $\partial O_r^j/\partial N_i^j<0$, meaning that an increase in the allocated fleet of company $i$ reduces the occupied fleet of competitor $r$: as company $i$ improves its service level, competitor $r$ loses demand, and its occupied fleet should decrease accordingly as a competitive response. In Eq.~\eqref{eq:dOrjoverdNij}, the numerator is strictly negative as $\partial D_r^j/\partial N_i^j<0$; it remains to ensure that the denominator is positive. To this end, define the threshold ${V}_{r,\mathrm{inv}}^j$ by
\begin{equation}
\label{eq:Vrjcr}
    \left.\frac{\partial w_r^j}{\partial V_r^j}\right|_{{V}_{r,\mathrm{inv}}^j}
    =-\frac{1}{D_\mathrm{max}^j},
\end{equation}
which exists as Eq.~\eqref{eq:w} is adopted. Then, $\forall \,V_r^j>{V}_{r,\mathrm{inv}}^j$,
\begin{equation}
\label{eq:posi_deno}
    1+D_r^j\left.\frac{\partial w_r^j}{\partial V_r^j}\right|_{{V}_{r,\mathrm{inv}}^j}
    >1-\frac{D_r^j}{D_\mathrm{max}^j}>0.
\end{equation}
Using Eq.~\eqref{eq:N=V+(w+T)D}, this threshold induces a lower bound $N_{r,\mathrm{inv}}^j$, thus in $N_r^j\in[N_{r,\mathrm{inv}}^j, N_r]$, 
\begin{equation}
\label{eq:dVrj+dOrj=0}
    \frac{\partial V_r^j}{\partial N_i^j}+\frac{\partial O_r^j}{\partial N_i^j}
    = \frac{\partial N_r^j}{\partial N_i^j} = 0,
    \,
    \frac{\partial O_r^j}{\partial N_i^j} = -\frac{\partial V_r^j}{\partial N_i^j} <0.
\end{equation}
Now introduce the collective fleet vectors
\begin{equation}
    \mathbf{N}=[\mathbf{N}_1^\top,\cdots, \mathbf{N}_n^\top]^\top,
    \,
    \mathbf{N}_i = [N_i^A, N_i^B, \cdots]^\top,
\end{equation}
and similarly for $\mathbf{V}$. The Jacobians
\begin{equation}
    \label{eq:jacobian_v_n}
    \mathbf{J}_\mathbf{V}(\mathbf{N})  = \frac{\partial\mathbf{V}}{\partial\mathbf{N}},\, \mathbf{J}_\mathbf{N}(\mathbf{V})  = \frac{\partial\mathbf{N}}{\partial\mathbf{V}},
\end{equation}
will be used later to ensure that $\mathbf{V}(\mathbf{N})$ is a well-defined  function. Let $\mathbf{P}$ be the orthogonal permutation matrix that groups variables by region:
\begin{equation}
\tilde{\mathbf{N}}=\mathbf{P}\mathbf{N}=\mathrm{col}\{\mathbf{N}^j\}_{j\in\mathcal{M}},
\,
\mathbf{N}^j=\mathrm{col}\{{N}_i^j\}_{i\in\mathcal{N}},
\end{equation}
and similarly $\tilde{\mathbf{V}} = \mathbf{P}\mathbf{V}$. Then
\begin{equation}
    \mathbf{J}_{\tilde{\mathbf{N}}}(\tilde{\mathbf{V}})
    = \mathbf{P}\mathbf{J}_\mathbf{N}(\mathbf{V})\mathbf{P}^\top.
\end{equation}
Since the dispatched fleet in each region depends only on the empty fleets in that same region, the cross-region derivatives vanish, yielding the block-diagonal form
\begin{equation}
    \mathbf{J}_{\tilde{\mathbf{N}}}(\tilde{\mathbf{V}})
    =
    \mathrm{diag}\left\{\mathbf{J}^j_{\tilde{\mathbf{N}}}(\tilde{\mathbf{V}}^j)\right\}_{j},
\end{equation}
where each regional block is
\begin{equation}
\label{eq:jacobian_general}
\mathbf{J}^j_{\tilde{\mathbf{N}}}(\tilde{\mathbf{V}}^j)
=
\frac{\partial \mathbf{N}^j}{\partial \mathbf{V}^j}
=
\begin{bmatrix}
\frac{\partial N_1^j}{\partial V_1^j} &\cdots & \frac{\partial N_1^j}{\partial V_n^j} \\
\vdots&\ddots & \vdots \\
\frac{\partial N_n^j}{\partial V_1^j} & \dots & \frac{\partial N_n^j}{\partial V_n^j}
\end{bmatrix}.
\end{equation}
Hence, the existence of $\mathbf{J}_\mathbf{V}(\mathbf{N})$ reduces to the invertibility of these regional blocks. The Jacobian matrix $\mathbf{J}_\mathbf{V}(\mathbf{N})$ exists for $\mathbf{N}\ge\mathbf{N}_{\mathrm{inv}}$. The proof is given in Appendix~\ref{appen:invproof}.

\section{Existence of Nash Equilibrium}
\label{sec:existence}

We now formulate the Nash game and analyze equilibrium existence using standard conditions on the feasible set and payoff functions \cite{rosen1965existence}. Given $\mathbf{N}_{-i}$ the fleet allocations of competitors, company $i$ solves
\begin{equation}
\label{eq:game_prob}
\begin{aligned}
    \mathcal{P}_i:\;\max_{\mathbf{N}_i\geq \mathbf{0}} & \quad p_i(\mathbf{N}_i, \mathbf{N}_{-i}) \\
   \mathrm{s.t.} &\quad 
   \sum_j N_{i}^j \leq {N}_i.
\end{aligned}
\end{equation}
The payoff\footnote{The operating cost is omitted to isolate the fleet-allocation mechanism under fixed fares. Over the dispatching timescale considered here, such costs can be treated as region-invariant or absorbed into a baseline term, and therefore do not alter the asymptotic concavity argument.} per unit time in [\$/min] is defined as
\begin{equation}
    p_i = \sum_{j\in \mathcal{M}}f_i^jT_i^jD_i^j(\mathbf{V}^j(\mathbf{N}^j)).
\end{equation}
The Nash equilibrium is defined as $\mathbf{N}^\star$ such that
\begin{equation}
\label{eq:gne}
    p_i(\mathbf{N}_i^\star, \mathbf{N}_{-i}^\star)\geq \max_{\mathbf{N}_i}\left\{ p_i(\mathbf{N}_i, \mathbf{N}_{-i}^\star)\;|\;\mathbf{N}_i\in\mathcal{X}_i\right\},
\end{equation}
for each $i$ where
\begin{equation}
\label{eq:def_chi}
\mathcal{X}_{i} = \left\{\mathbf{N}_i\;|\;{N}_i^j\geq 0,\sum_j N_{i}^j \leq {N}_i\right\}.
\end{equation}
Since $\mathbf{V}^j(\mathbf{N}^j)$ is defined implicitly, the payoff $p_i$ generally has no closed-form expression and is not globally concave. We therefore focus on the \textbf{asymptotic quasi-concavity} of $p_i$ by showing that there exists a threshold $N_{i,\mathrm{ac}}^j$ such that
\begin{equation}
\label{eq:d2pidNi2}
\frac{\partial^2 p_i}{\partial (N_i^j)^2}
=f_i^j T_i^j\frac{\partial^2 D_i^j}{\partial (N_i^j)^2 }<0
\quad \forall N_i^j>N_{i,\mathrm{ac}}^j,
\end{equation}
for fixed $N_{-i}^j$. Hence, the curvature of $p_i$ is directly determined by that of $D_i^j$. To establish this property, we first examine the first derivative of $D_i^j$:
\begin{equation}
    \begin{aligned}
    \label{eq:full_dDidNi}
        \frac{\partial D_i^j}{\partial N_i^j} 
        & = \frac{\partial D_i^j}{\partial V_i^j}\frac{\partial V_i^j}{\partial N_i^j}
        + \sum_{r\neq i}\frac{\partial D_i^j}{\partial V_r^j}\frac{\partial V_r^j}{\partial N_i^j}.
    \end{aligned}
\end{equation}
To evaluate $V_i^j$ and $V_r^j$ w.r.t. $N_i^j$, consider
\begin{equation}
     \frac{\partial \mathbf{N}^j}{\partial \mathbf{V}^j}
     =\mathbf{I}+\frac{\partial \mathbf{O}^j}{\partial \mathbf{V}^j}
     =\mathbf{I}+\mathbf{M}^j(\mathbf{V}^j),
\end{equation}
where $\mathbf{I}$ is the identity matrix. Since
\begin{equation}
    \sup_{\mathbf{V}^j} O_i^j=\frac{T_i^jD_\mathrm{max}^j}{u_\mathrm{PT}^j+n},
\end{equation}
$\mathbf{M}^j(\mathbf{V}^j)$ decays to zero as $\mathbf{V}^j$ increases. Hence, there exists a threshold $\mathbf{V}_{\mathrm{norm}}^j$ allowing the Neumann series expansion:
\begin{equation}
\label{eq:neumann}
    \frac{\partial \mathbf{V}^j}{\partial \mathbf{N}^j}    =\mathbf{I}+\sum_{k=1}^\infty(-1)^k(\mathbf{M}^j)^k\quad \forall\mathbf{V}^j>\mathbf{V}_\mathrm{norm}^j.
\end{equation}
Then for fixed ${V}_{-i}^j$, we have the following approximation:
\begin{equation}
\label{eq:dVdNelement}
    \begin{aligned}
            \frac{\partial V_i^j}{\partial N_i^j}&=1-O\left(\frac{\partial O_i^j}{\partial V_i^j}\right),\\
            \frac{\partial V_r^j}{\partial N_i^j}&=-\frac{\partial O_r^j}{\partial V_i^j}+O\left( \left( \frac{\partial O_r^j}{\partial V_i^j} \right)^2  \right),\quad \forall r\neq i,
    \end{aligned}
\end{equation}
where
\begin{equation}
\label{eq:dOidVi}
    \begin{aligned}
        \frac{\partial O_i^j}{\partial V_i^j}
        & = D_\mathrm{max}^j\frac{u_i^j}{u_\mathrm{tot}^j} \frac{\partial w_i^j}{\partial V_i^j}\\
        &\quad + (w_i^j+T_i^j)D_\mathrm{max}^j\frac{u_\mathrm{tot}^j-u_i^j}{(u_\mathrm{tot}^j)^2}\frac{\partial u_i^j}{\partial V_i^j},
    \end{aligned}
\end{equation}
\begin{equation}
\label{eq:dOrdVi}
    \frac{\partial O_r^j}{\partial V_i^j}
    =-(w_r^j+T_r^j)D_\mathrm{max}^j\frac{u_r^j}{(u_\mathrm{tot}^j)^2}\frac{\partial u_i^j}{\partial V_i^j}.
\end{equation}
Applying Eqs.~\eqref{eq:C}--\eqref{eq:w} gives
\begin{equation}
\label{eq:dwijdVij}
    \frac{\partial w_i^j}{\partial V_i^j}= -c^j\beta^j(V_i^j)^{-(c^j+1)}
    \sim O\left( (V_i^j)^{-(c^j+1)}\right),
\end{equation}
\begin{equation}
\label{eq:duijdVij}
    \frac{\partial u_i^j}{\partial V_i^j}= \lambda_w^j\beta^ju_i^j\cdot(V_i^j)^{-(c^j+1)}
    \sim O\left( (V_i^j)^{-(c^j+1)}\right),
\end{equation}
and therefore
\begin{equation}
    \frac{\partial O_i^j}{\partial V_i^j},\;\frac{\partial O_r^j}{\partial V_i^j}
    \sim O\left( (V_i^j)^{-(c^j+1)}\right).
\end{equation}
Since $c^j>0$, the higher-order terms in Eq.~\eqref{eq:dVdNelement} vanish as $V_i^j$ increases. Similarly,
\begin{equation}
\label{eq:dDidVi}
    \frac{\partial D_i^j}{\partial V_i^j}
    =D_\mathrm{max}^j\frac{u_\mathrm{tot}^j-u_i^j}{(u_\mathrm{tot}^j)^2}\frac{\partial u_i^j}{\partial V_i^j}
    \sim O\left(  (V_i^j)^{-(c+1)}\right),
\end{equation}
\begin{equation}
\label{eq:dDidVr}
    \frac{\partial D_i^j}{\partial V_r^j}
    =-D_\mathrm{max}^j\frac{u_i^j}{(u_\mathrm{tot}^j)^2}\frac{\partial u_r^j}{\partial V_r^j}
    \sim O\left(  (V_r^j)^{-(c+1)}\right).
\end{equation}
Substituting Eqs.~\eqref{eq:dOrdVi}, \eqref{eq:dDidVi}, and \eqref{eq:dDidVr} into Eq.~\eqref{eq:full_dDidNi} yields
\begin{equation}
    \begin{aligned}
    \label{eq:11}
        \frac{\partial D_i^j}{\partial N_i^j} 
        & =\frac{\partial D_i^j}{\partial V_i^j}
        - \sum_{r\neq i}\frac{\partial D_i^j}{\partial V_r^j}\frac{\partial O_r^j}{\partial V_i^j}
        + O\left( (V_i^j)^{-2(c+1)}\right)\\ 
        & \approx \frac{D_\mathrm{max}^j}{u_\mathrm{tot}^j}\left( 1-\frac{{u}_i^j}{u_\mathrm{tot}^j}-\frac{{u}_i^j}{(u_\mathrm{tot}^j)^3} K(V_{-i}^j)\right)\frac{\partial u_i^j}{\partial V_i^j} \\
        & = E(u_i^j)\frac{\partial u_i^j}{\partial V_i^j},
    \end{aligned}
\end{equation}
where
\begin{equation}
    K(V_{-i}^j)= D_\mathrm{max}^j\sum_{r\neq i}(w_r^j+T_r^j)u_r^j\frac{\partial u_r^j}{\partial V_r^j}.
\end{equation}
Since $K(V_{-i}^j)$ is independent of $V_i^j$, $E(u_i^j)$ converges to a constant. Computing the second derivative of $u_i^j$ gives
\begin{equation}
\label{eq:d2uidVi2}
    \begin{aligned}
         \frac{\partial ^2u_i^j}{\partial (V_i^j)^2} 
         &= c^j(c^j+1) \lambda_w^j\beta^j u_i^j\cdot (V_i^j)^{-(c^j + 2)}\\
         &\quad \cdot \left( \frac{c^j\lambda_w^j\beta^j}{c^j+1}(V_i^j)^{-c^j}   -1\right).
    \end{aligned}
\end{equation}
Hence,
\begin{equation}
\label{eq:concaveui}
       \frac{\partial ^2u_i}{\partial (V_i^j)^2}<0
       \qquad\forall 
V_i^j>\left(  \frac{c^j\lambda_w^j\beta^j}{c^j+1}\right)^{1/{c^j}},
\end{equation}
implying the existence of a lower bound of $V_i^j$ such that
\begin{equation}
     \frac{\partial^2 u_i^j}{(\partial V_i^j)^2}<0\Rightarrow\frac{\partial^2 D_i^j}{(\partial N_i^j)^2}<0\Rightarrow\frac{\partial^2 p_i^j}{(\partial N_i^j)^2}<0.
\end{equation}
Thus, one can define the Restricted Nash Equilibrium (RNE) $\mathbf{N}_{\mathrm{res}}^\star$ such that, for all $i$,
\begin{equation}
\label{eq:restrict_neq}
\mathbf{N}_{i,\mathrm{res}}^\star=\arg \max_{\mathbf{N}_i}\left\{ p_i(\mathbf{N}_i, \mathbf{N}_{-i,\mathrm{res}}^\star)\;|\;\mathbf{N}_i\in \mathcal{X}_{i,\mathrm{ac}}\right\},
\end{equation}
and the existence is guaranteed by Theorem 1.
\begin{theorem}
\label{thm:existence}
Consider a competitive ride-hailing market with a company set $\mathcal{N}$. There exists a finite threshold fleet $\mathbf{N}_{\text{ac}}$ defining the collective feasible set $\mathcal{X}_{\mathrm{ac}} = \prod_{i\in\mathcal{N}} \mathcal{X}_{i,\mathrm{ac}}$, where
\begin{equation}
\label{eq:def_chi_ac}
\mathcal{X}_{i,\mathrm{ac}} = \left\{\mathbf{N}_i\;|\;{N}_i^j\geq N_{i,\mathrm{ac}}^j,\sum_j N_{i}^j \leq {N}_i\right\},
\end{equation}
such that $p_i(\mathbf{N}_i, \mathbf{N}_{-i})$ is quasi-concave for all $\mathbf{N}_i\in\mathcal{X}_{i,\mathrm{ac}}$ given fixed $\mathbf{N}_{-i}\in \mathcal{X}_{-i,\mathrm{ac}}$. Consequently, a Restricted Nash Equilibrium $\mathbf{N}^\star = (\mathbf{N}_1^\star, \dots, \mathbf{N}_n^\star)$ exists in the game $\mathcal{G} = (\mathcal{N}, p_i, \mathcal{X}_{i,\mathrm{ac}})$.
\end{theorem}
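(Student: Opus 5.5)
The plan is to reduce the claim to the classical existence theorem for concave/quasi-concave games (Rosen~\cite{rosen1965existence}, or equivalently the Debreu--Glicksberg--Fan theorem). That theorem needs three ingredients: compact convex nonempty strategy sets, continuous payoffs, and quasi-concavity of each player's payoff in its own strategy. I will establish these in that order and then invoke the fixed-point result on the restricted product set $\mathcal{X}_{\mathrm{ac}}$.

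\textbf{Choosing the threshold.} For each $i,j$ I will set $N_{i,\mathrm{ac}}^j$ large enough that every lower bound derived earlier holds uniformly on the restricted set. Concretely, $N_{i,\mathrm{ac}}^j$ must dominate:
\begin{itemize}
\item $N_{i,\mathrm{inv}}^j$, so that $\mathbf{J}_\mathbf{V}(\mathbf{N})$ exists and $\mathbf{V}(\mathbf{N})$ is a well-defined $C^2$ map by the inverse function theorem (Appendix~\ref{appen:invproof});
\item the fleet level corresponding to $\mathbf{V}_{\mathrm{norm}}^j$, so that the Neumann expansion \eqref{eq:neumann} is valid;
\item the fleet level corresponding to the concavity bound \eqref{eq:concaveui}.
\end{itemize}
Passing from $V$-thresholds to $N$-thresholds uses \eqref{eq:N=V+(w+T)D}. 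Since $O_i^j$ is bounded by $T_i^jD_\mathrm{max}^j$ plus a term in $w_i^j$ that vanishes as $V_i^j$ grows, $N_i^j\ge V_i^j$ and $N_i^j-V_i^j$ is uniformly bounded. A finite $N$-threshold therefore forces $V_i^j$ above any prescribed finite level, uniformly in the competitors' allocations.

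I will additionally take $N_{i,\mathrm{ac}}^j$ large enough that the remainder $O((V_i^j)^{-2(c^j+1)})$ in \eqref{eq:11} and the correction terms in \eqref{eq:dVdNelement} are dominated by the leading negative curvature term. This is possible because the leading term is of order $(V_i^j)^{-(c^j+2)}$, which decays more slowly than the remainder. I also assume $\sum_j N_{i,\mathrm{ac}}^j\le N_i$ so that $\mathcal{X}_{i,\mathrm{ac}}$ is nonempty. With that, $\mathcal{X}_{i,\mathrm{ac}}$ is a nonempty compact polytope, hence convex, and $\mathcal{X}_{\mathrm{ac}}$ is a compact convex product.

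\textbf{Continuity and quasi-concavity.} Continuity of $p_i$ in $\mathbf{N}$ follows because $D_i^j$ is smooth in $\mathbf{V}^j$ and $\mathbf{V}^j(\mathbf{N}^j)$ is $C^1$ on the restricted set. For quasi-concavity I use the separable structure $p_i=\sum_j f_i^jT_i^jD_i^j(\mathbf{V}^j(\mathbf{N}^j))$: each summand depends only on $N_i^j$ once $\mathbf{N}_{-i}$ is fixed. The Hessian of $p_i$ in $\mathbf{N}_i$ is then diagonal, with entries given by \eqref{eq:d2pidNi2}. Differentiating \eqref{eq:11} once more gives
\begin{equation}
\frac{\partial^2 D_i^j}{\partial (N_i^j)^2}\approx E'(u_i^j)\Big(\frac{\partial u_i^j}{\partial V_i^j}\Big)^2+E(u_i^j)\frac{\partial^2 u_i^j}{\partial (V_i^j)^2}.
\end{equation}
Here $E$ tends to a positive constant, by positivity of $\partial D_i^j/\partial N_i^j$ in \eqref{eq:dDijdNij>0}, and $\partial^2u_i^j/\partial (V_i^j)^2<0$ is of order $(V_i^j)^{-(c^j+2)}$. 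The first term is of order $(V_i^j)^{-2(c^j+1)}$, which is asymptotically smaller, so the diagonal entries are negative beyond the threshold. The Hessian is then negative definite, so $p_i$ is strictly concave, and in particular quasi-concave, in $\mathbf{N}_i$ on $\mathcal{X}_{i,\mathrm{ac}}$.

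\textbf{Conclusion and main obstacle.} Given these ingredients, the best-response correspondence is nonempty, convex-valued and, by Berge's maximum theorem, upper hemicontinuous. Kakutani's theorem then yields a fixed point, which is the Restricted Nash Equilibrium of \eqref{eq:restrict_neq}. The main obstacle is making the asymptotic curvature argument rigorous and \emph{uniform in} $\mathbf{N}_{-i}$. The quantities $E(u_i^j)$, $K(V_{-i}^j)$ and the Neumann remainders all depend on competitors' empty fleets, so I must show that the sign-determining bounds hold with constants independent of $\mathbf{N}_{-i}\in\mathcal{X}_{-i,\mathrm{ac}}$. My first attempt is to use compactness of $\mathcal{X}_{-i,\mathrm{ac}}$ together with $u_r^j\in(0,1]$ to bound $K$ and $u_\mathrm{tot}^j$ above and below by constants, then pick $N_{i,\mathrm{ac}}^j$ via an explicit comparison of the powers $(V_i^j)^{-(c^j+2)}$ and $(V_i^j)^{-2(c^j+1)}$.
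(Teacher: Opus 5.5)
Your proposal follows the same route as the paper. It verifies Rosen's conditions on $\mathcal{X}_{i,\mathrm{ac}}$: the set is nonempty, compact and convex under the admissibility condition $\sum_j N_{i,\mathrm{ac}}^j\le N_i$; $p_i$ is continuous via the invertible Jacobian; and own-strategy curvature is negative because the sign of $\partial^2 D_i^j/\partial (N_i^j)^2$ is driven by Eq.~\eqref{eq:concaveui}. Your additions make explicit what the paper leaves implicit without changing the argument: the regional separability that makes the Hessian in $\mathbf{N}_i$ diagonal, the uniform bound on $O_i^j$ that turns $V$-thresholds into $N$-thresholds, and the order comparison $(V_i^j)^{-(c^j+2)}$ versus $(V_i^j)^{-2(c^j+1)}$ made uniform in $\mathbf{N}_{-i}$.
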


\begin{proof}
The proof follows the sufficient conditions in \cite{rosen1965existence}. The set $\mathcal{X}_{i,\mathrm{ac}}$ is non-empty provided the threshold is admissible, i.e.,
\begin{equation*}
    \sum_{j} N_{i,\mathrm{ac}}^j \leq N_i.
\end{equation*}
Under this condition, $\mathcal{X}_{i,\mathrm{ac}}$ is defined by finitely many linear inequalities and is therefore non-empty, convex, and compact. The payoff $p_i$ is continuous because $D_i^j$ and $\mathbf{V}$ are continuous in their arguments, while $\mathbf{J}_\mathbf{V}(\mathbf{N})$ exists for sufficiently large $\mathbf{N}$. Finally, Eqs.~\eqref{eq:d2pidNi2}--\eqref{eq:concaveui} establish the asymptotic quasi-concavity of $p_i$ for fixed $\mathbf{N}_{-i}$. Hence, a Restricted Nash equilibrium exists for all $\mathbf{N}\in \mathcal{X}_{\mathrm{ac}}$.
\end{proof}


\section{Numerical Results}
\label{sec:result}

To illustrate the properties of the proposed model, we consider a simplified market with two ride-hailing companies, $i\in\mathcal{N}=\{1,2\}$, and two regions, $j\in \mathcal{M} = \{A, B\}$. This section examines how fleet distribution and market competition affect dispatching strategies and market outcomes. We impose an upper bound on the total number of ride-hailing vehicles, namely the \textbf{total supply level} $N_{\mathrm{tot}}={N}_1+{N}_2$, to represent a regulatory cap intended to mitigate congestion and market saturation \cite{yu2020balancing, chen2024operations}. The simulation parameters are summarized in Table~\ref{tab:simulation_parameters_revised}, with Region A representing a congested city center.


To compare results across different values of ${N}_{\mathrm{tot}}$, the following equilibrium quantities are normalized and expressed in percentage, denoted by the subscript $\%$:
\begin{itemize}
    \item $\mathbf{N}^\star_\% =  [N_{1,\%}^{A,\star}, N_{1,\%}^{B,\star}, N_{2,\%}^{A,\star}, N_{2,\%}^{B,\star}]^\top$: normalized by ${N}_i$.
    \item $\mathbf{V}^\star_\% =  [V_{1,\%}^{A,\star}, V_{1,\%}^{B,\star}, V_{2,\%}^{A,\star}, V_{2,\%}^{B,\star}]^\top$: normalized by ${N}_i$.
    \item $\mathbf{D}^\star_\% =  [D_{1,\%}^{A,\star}, D_{1,\%}^{B,\star}, D_{2,\%}^{A,\star}, D_{2,\%}^{B,\star}]^\top$: normalized by $D_{\mathrm{max}}^j$.
\end{itemize}

\begin{table}
\centering
\caption{Simulation parameters for Regions A and B. Values are inspired by real-world operations.}
\label{tab:simulation_parameters_revised}
\begin{tabular}{@{}llc@{}}
\toprule
\textbf{Notation} & \textbf{Description} & \textbf{Value (A, B) and Unit}  \\
\midrule
$T$ & Average travel time & $9$, $7.5$  [min] \\
$D_{\mathrm{max}}$ & Maximum demand flux & $10$, $5$  [veh/min] \\
$\beta_\mathrm{pickup}$ & Pickup time factor & $30$, $25$  [min] \\
$w_\mathrm{PT}$ & PT waiting time & $10$, $15$  [min] \\
$T_\mathrm{PT}$ & PT traveling time & $10$, $15$  [min] \\
$\theta$ & Cost sensitivity & $1$, $1$  [-] \\
$\lambda_w$ & Value of waiting time & $0.2$, $0.2$  [\$/min$^2$] \\
$\lambda_T$ & Value of travel time & $0.1$, $0.1$  [\$/min$^2$] \\
$\lambda_f$ & Value of charged fare & $1$, $1$  [-] \\
$\mathbf{f}_\mathrm{def}$ & Definite charged fare & $1.8$, $2.2$  [\$/min] \\
\bottomrule
\end{tabular}
\end{table}

\subsection{Variation of Equilibrium Strategy}
\label{subsubsec:NwithNtot}

We investigate the equilibrium as a function of \textbf{Company~1's fleet fraction of the total supply}, defined as
\begin{equation}
    \alpha_1 \cong {N}_1/{N}_{\mathrm{tot}} \times 100\%,
\end{equation}
over the full range from $0\%$ to $100\%$. The analysis is conducted for three supply levels: ${N}_{\mathrm{tot}} = 500, 300$, and $200$. Figure~\ref{fig:Nijcomp} shows the corresponding equilibrium dispatching strategies. The curves are distinguished by company (blue for Company~1, red for Company~2) and by region (solid for A, dashed for B).

We first consider the case $N_{\mathrm{tot}}=500$ (left of Figure~\ref{fig:Nijcomp}). When Company~1 has a relatively small fleet ($\alpha_1 \leq 7.6\%$), its equilibrium strategy is to allocate all vehicles to Region B, i.e., $N_{1,\%}^{B,\star}=100\%$, thereby avoiding direct competition with the larger Company~2 in Region A. Over this range, Company~2 is only weakly affected by Company~1. Once $\alpha_1 > 7.6\%$, Company~1 starts to serve both regions. Beyond $10\%$, it allocates a larger share to Region A than to Region B, and this tendency continues as its fleet fraction increases. The decrease in $N_{1,\%}^{B,\star}$ should be interpreted as a proportional shift: newly added vehicles are increasingly directed to Region A. As $\alpha_1 \to 100\%$, the roles reverse and Company~2 becomes the smaller competitor, exhibiting the same concentration pattern.

The middle panel with $N_{\mathrm{tot}}=300$ shows the same qualitative behavior, but the transition to service in both regions occurs later, at $\alpha_1=13.7\%$. When ${N}_{\mathrm{tot}}=200$, this threshold rises further to $22.9\%$ (right panel). Hence, as the total market supply decreases, the smaller company remains concentrated in one region over a wider range of fleet fractions. The resulting transition is sharper at low supply levels and appears as visible kinks in the curves, especially for ${N}_{\mathrm{tot}}=200$.

\begin{figure*}
    \centering
    \includegraphics[width=0.96\linewidth]{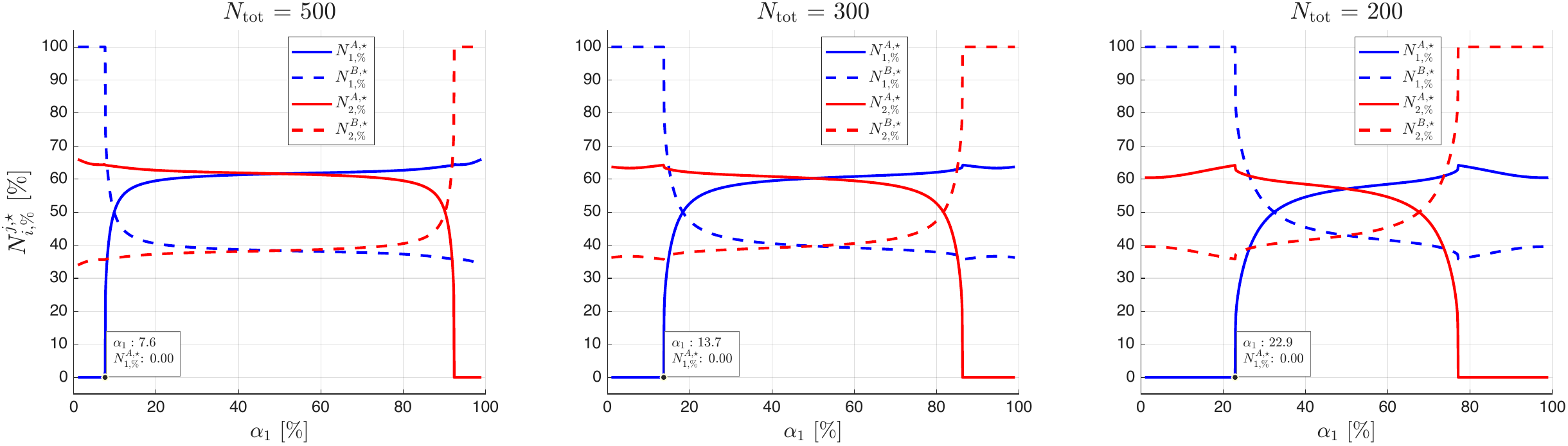}
    \caption{Equilibrium dispatch strategy ($\mathbf{N}_\%^\star$), normalized by ${N}_i$, for different values of ${N}_{\mathrm{tot}}$. Kinks indicate sharp transitions between both-region operation and single-region concentration.}
    \label{fig:Nijcomp}
\end{figure*}

\subsection{Equilibrium Financial and Social Outcomes}

We next analyze financial and service outcomes as functions of $\alpha_1$. Since each company maximizes its own payoff $p_i$, we first examine the equilibrium profits $p_i^\star$. We then evaluate service outcomes through the total realized regional demand, $\mathbf{D}_{\%}^{j,\star} = D_{1,\%}^{j,\star} + D_{2,\%}^{j,\star}$. Results are shown in Figure~\ref{fig:pDcomp}.

The left panel shows the equilibrium profit of each company. As expected, $p_1^\star$ generally increases with $\alpha_1$, whereas $p_2^\star$ decreases, reflecting the competitive nature of the market. The middle panel reports the total industry profit $p^\star =\sum_i p_i^\star$. In all three supply scenarios, the curve is concave and reaches its maximum near $\alpha_1=50\%$, while lower values are observed as the market approaches monopoly.

The right panel shows the normalized total realized demand in each region. Across all ${N}_{\mathrm{tot}}$ scenarios, the ride-hailing market captures more demand in Region B than in Region A. This suggests that Region A is harder to serve efficiently, due to stronger congestion and stronger competition from public transport. The small increase in $D_{\%}^{B,\star}$ for very small or very large $\alpha_1$ is associated with the regional concentration strategy adopted by the smaller competitor. Despite these local variations, the total realized demand (summing up both regions)
is highest near $\alpha_1=50\%$ for all three values of $N_\mathrm{tot}$. Therefore, in this numerical setting and under the adopted Logit demand specification, more balanced fleet sizes are associated with both higher total profit and higher total served demand.

\begin{figure*}
    \centering
    \includegraphics[width=0.96\linewidth]{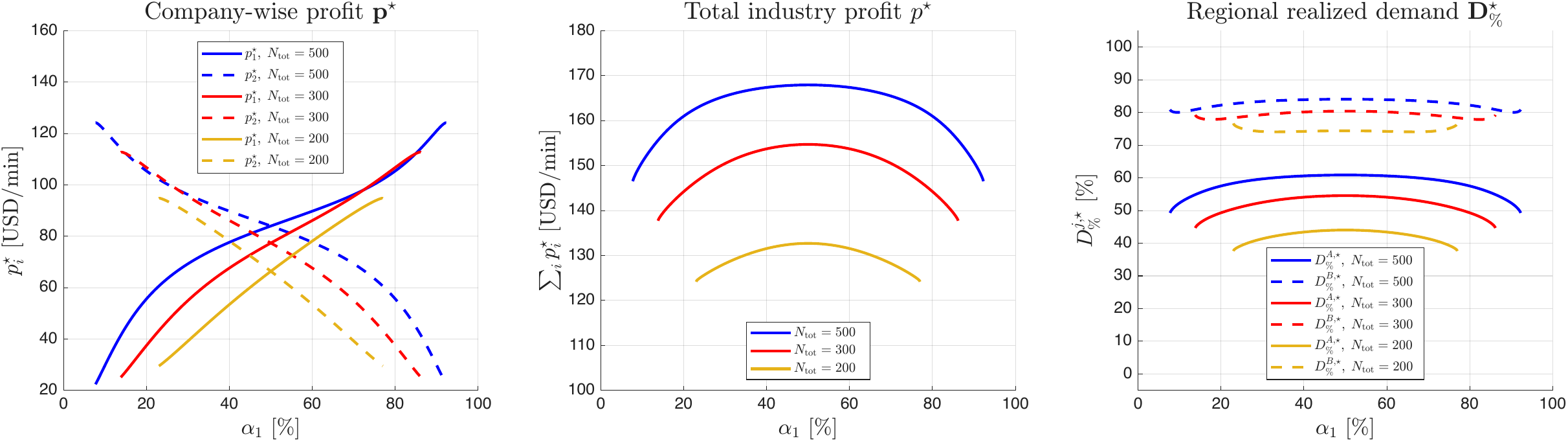}
    \caption{Equilibrium market outcomes as functions of the fleet fraction $\alpha_1$. Single-region concentration solutions with any $N_i^j=0$ are omitted for clarity.}
    \label{fig:pDcomp}
\end{figure*}
\section{Conclusion}
\label{sec:concl}
This paper develops a stylized model of a non-cooperative ride-hailing (RH) market in which competing companies dispatch their fleets across multiple regions to maximize individual profit. The problem is formulated as a static multi-player game. The model is built on an aggregate flow conservation equation, which leads to a nontrivial and generally non-concave payoff. Nevertheless, we establish an equilibrium existence result on a restricted feasible set by proving the asymptotic quasi-concavity of the payoff functions beyond an operational fleet threshold.

A numerical study illustrates the variation of dispatching strategies in a two-company, two-region market with distinct regional characteristics. We impose an upper bound on the total number of vehicles, interpreted as a regulated market supply level, and examine how the equilibrium depends on the fleet shares of the two companies. The results show that the larger company tends to serve both regions, whereas the smaller company tends to concentrate its fleet in one region and begins to serve both regions only after its fleet share percentage exceeds a threshold. This threshold increases as the total supply level decreases. In the numerical setting considered here and under the adopted Logit demand model, more balanced fleet sizes are also associated with higher total industry profit and higher total served demand.

Several directions remain for future research. First, the equilibrium analysis can be strengthened by relaxing some of the current assumptions and identifying conditions for uniqueness. Second, the model can be expanded to include additional decision variables, such as fares, in order to study the joint effects of pricing and dispatching on market and welfare outcomes. Finally, future work can move beyond the static setting by incorporating time-varying or uncertain demand and extending the framework toward dynamic control and repositioning problems.
\section{Appendix}
\subsection{Proof of Invertible Jacobian $\mathbf{J}_\mathbf{N}(\mathbf{V})$}
\label{appen:invproof}

    We have shown the equivalence between the existence of $\mathbf{J}_\mathbf{V}(\mathbf{N})$ and the invertibility of $\mathbf{J}^j_{\tilde{\mathbf{N}}}(\tilde{\mathbf{V}}^j)$. We apply Theorem (4,3) of \cite{fiedler1962matrices}, which states that a matrix ${A}$ belonging to the class $\mathbf{Z}$ (real square matrices with non-positive off-diagonal entries) is invertible if and only if there exists a vector $\mathbf{x} >\mathbf{0}$ such that ${A}\mathbf{x} > \mathbf{0}$. Obviously, $\mathbf{J}^j_{\tilde{\mathbf{N}}}(\tilde{\mathbf{V}}^j)\in\mathbf{Z}$ since its off-diagonals satisfy
    \begin{equation}
    \label{eq:dNijoverdVrj}
     \frac{\partial N_r^j}{\partial V_i^j} 
     = (w_r^j+T_r^j)\frac{\partial D_r^j }{\partial V_i^j}<0\quad\forall r\neq i.
    \end{equation}
    We then choose a vector $\mathbf{x}=[x_1,\cdots,x_n]^\top\in\mathbb{R}^n_{++}$ with the components set to the reciprocal of the total occupied time: ${x}_r=(w_r^j+T_r^j)^{-1}$ $\forall r$. The $i$-th component of the resulting vector $\mathbf{x}^\top\mathbf{J}^j_{\tilde{\mathbf{N}}}(\tilde{\mathbf{V}}^j)$ is given by:
\begin{equation}
     \begin{aligned}
    & \;x_i\cdot\frac{\partial N_i^j}{\partial V_i^j}+\sum_{r\neq i} x_r\cdot\frac{\partial N_r^j}{\partial V_i^j}\\
         = & \;x_i\left(1+ \frac{\partial O_i^j}{\partial V_i^j}\right)+\sum_{r\neq i} x_r\cdot\frac{\partial O_r^j}{\partial V_i^j}\\
         = & \; x_i\left(1+ D_i^j\frac{\partial w_i^j}{\partial V_i^j}\right)+ \sum_{r} x_r(w_r^j+T_r^j)
         \frac{\partial D_r^j}{\partial V_i^j}\\
         = & \;\frac{1}{w_i^j+T_i^j}\left(1+ D_i^j\frac{\partial w_i^j}{\partial V_i^j}\right)+\sum_{r}\frac{\partial D_r^j}{\partial V_i^j}.
     \end{aligned}
\end{equation}
Since the constraint $\mathbf{N} \in \mathcal{X}$ enforces ${V_i^j>V_{i,\mathrm{inv}}^j}$, we have
\begin{equation*}
    1+ D_i^j\frac{\partial w_i^j}{\partial V_i^j}>0.
\end{equation*}
Meanwhile, the Logit demand model implies
\begin{equation}
\begin{aligned}
    & \frac{\partial D_\mathrm{max}^j}{\partial V_i^j}=0=\frac{\partial D_\mathrm{PT}^j}{\partial V_i^j} +\frac{\partial\sum_r D_r^j}{\partial V_i^j}\\ \Rightarrow\quad
    & \sum_{r}\frac{\partial D_r^j}{\partial V_i^j}=-\frac{\partial D_\mathrm{PT}^j}{\partial V_i^j} > 0.
\end{aligned}
\end{equation}
Therefore, $\left[\mathbf{x}^\top\mathbf{J}^j_{\tilde{\mathbf{N}}}(\tilde{\mathbf{V}}^j)\right]_i>0$ $\forall i$, indicating $\mathbf{J}^j_{\tilde{\mathbf{N}}}(\tilde{\mathbf{V}}^j)^\top\mathbf{x}>\mathbf{0}$ and the invertibility of $\mathbf{J}_{\mathbf{N}}(\mathbf{V})$. Consequently, by the Inverse Function Theorem, $\mathbf{J}_\mathbf{V}(\mathbf{N})=\mathbf{J}_{\mathbf{N}}(\mathbf{V})^{-1}$ is well-defined and guaranteed to exist. Additionally, since $\mathbf{J}_{\mathbf{N}}(\mathbf{V})\in\mathbf{Z}$, its inverse $\mathbf{J}_\mathbf{V}(\mathbf{N})$ is a positive matrix.

\bibliographystyle{IEEEtran}
\bibliography{mybibfile}

\end{document}